\begin{document}

\mainmatter

\title{Relative Convex Hull Determination \\from Convex Hulls in the Plane}

\titlerunning{Relative Convex Hull from Convex Hulls}

\author{Petra Wiederhold \and Hugo Reyes}
\authorrunning{P. Wiederhold \and H. Reyes}

\institute{Department of Automatic Control,\\ Centro de Investigaci\'on y
de Estudios Avanzados (CINVESTAV-IPN),\\ Av. I.P.N. 2508, Col. San
Pedro Zacatenco, M\'exico 07000 D.F., M\'exico,\\
pwiederhold@gmail.com, hrb87@hotmail.com}

\index{Wiederhold, Petra}
\index{Reyes, Hugo}

\maketitle

\begin{abstract}
A new algorithm for the determination of the relative convex hull in the plane of a simple polygon $A$ with respect to another simple polygon $B$ which contains $A$, is proposed. The relative convex hull is also known as geodesic convex hull, and the problem of its determination in the plane is equivalent to find the shortest curve among all Jordan curves lying in the difference set of $B$ and $A$ and encircling $A$. Algorithms sol\-ving this problem known from Computational Geometry are based on the triangulation or similar decomposition of that difference set. The algorithm presented here does not use such decomposition, but it supposes that $A$ and $B$ are given as ordered sequences of vertices. The algorithm is based on convex hull calculations of $A$ and $B$ and of smaller polygons and polylines, it produces the output list of vertices of the relative convex hull from the sequence of vertices of the convex hull of $A$.\\
\\
{\bf Keywords:} Relative convex hull, geodesic convex hull, shortest Jordan curve, shortest path, minimal length polygon, minimal perimeter polygon
\end{abstract}

\section{Introduction}

The relative convex hull (RCH), also called geodesic convex hull, recently has received increasing attention in Computational Geometry \cite{ToussGeodesic}, in particular related to shortest path problems which appear in a variety of applications as in robotics, industrial manufacturing, networking, or processing of geographical data \cite{ToussSeparating},\cite{Mitchell}. It was earlier defined in the context of Digital Geometry and Topology and their applications in Digital Image Analysis, where the RCH and related structures based on geodesic metrics have been proposed as approximations of digital curves and surfaces and for multi-grid convergent estimations of curve length or surface area \cite{SklanskyKibler},\cite{Lantuejoul1},\cite{Lantuejoul2},\cite{SlobStoer},\cite{SlobZatcoStoer},\cite{DigGeomBook},\cite{LiKletteBook},\cite{Biswas}, \cite{Ishaque},\cite{RobertFaugeras},\cite{YuKlette},\cite{WiedVilla}.

The convex hull of a set $S$ in the Euclidean space is obtained by filling up $S$ with all points lying on straight line segments having end points in $S$. If $S$ is contained in another set $T$, to construct the RCH of $S$ with respect to $T$, points lying on straight line segments with end points in $S$ are added whenever these segments already belong to $B$.

In the Euclidean plane and for sets $S\subset T$, the RCH of $S$ with respect to $T$, denoted by $CH_T(S)$, is obtained by allocating a tight thread around $A$ but within $B$, see Figure \ref{fig:1}a). In this paper we study the RCH for simple polygons $S$,$T$. In \cite{Ishaque}, the RCH was considered for the more general situation where $S$ is a finite point set and $T$ is a polygonal domain. A distinct definition of RCH applies to disjoint simple polygons $S,T$, then $CH_T(S)$ is the weakly simple polygon formed by the shortest closed polygonal path without self-crossings which circumscribes $S$ but excludes $T$ \cite{ToussSeparating}, see Figure \ref{fig:1}b).

Under special conditions for the polygons $S$, $T$, $S\subset T$, the RCH coincides with the Minimum Perimeter Polygon (MPP) of $S$ with respect to $T$, also called the Minimum Length Polygon (MLP), whose frontier is the shortest Jordan curve among all Jordan curves which circumscribe $S$ but are contained in $T$ \cite{DigGeomBook},\cite{SlobStoer},\cite{SlobZatcoStoer}. The MPP was first defined for polygons $S,T$ which are point set unions of cell complexes within plane mosaics modelling the digital plane where the pixels are identified with convex not necessarily uniform tiles \cite{SklanskyBlobs},\cite{SklanskyKibler},\cite{SklanskyMeasConcav},\cite{SklanskyMPPSilh}, see Figure \ref{fig:1}c). These polygons $S,T$ are constructed as the Inner and Outer Jordan digitization of a subset of the Euclidean plane which is the interior of a given Jordan curve $\gamma $. For the digital plane modeled by the standard quadratic complex where all pixels are grid squares of the same size, $S,T$ are isothetic simple polygons and $(T\setminus S)$ is a union of grid squares called grid continuum, see Figure \ref{fig:1}d). In this case, the length of the frontier of the RCH is a multi-grid convergent estimator of the length of the Jordan curve $\gamma $ \cite{DigGeomBook},\cite{SlobStoer},\cite{SlobZatcoStoer}. Several efficient MLP algorithms are known, for example the corrected version of \cite{KletteKovYip} in \cite{LiKletteBook}, and \cite{Provencal}, but these can be applied only to digital continua or polyominoes.

\begin{figure}
\centering
\includegraphics[height=2.3cm]{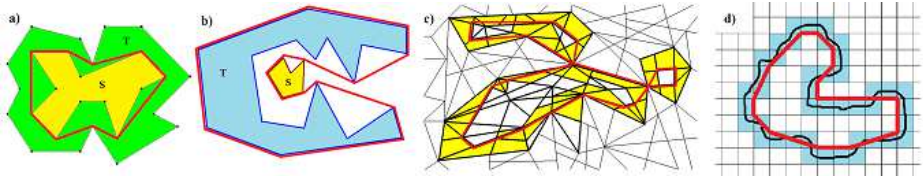}
\caption{a) RCH of a set with respect to a superset, b) RCH for two disjoint sets, c) MPP of a subcomplex of a mosaic, d) MLP of a grid continuum being a digital model of a Jordan curve.}
\label{fig:1}
\end{figure}

In this paper we propose a novel algorithm for the determination of the ordered list of all vertices of the RCH, for the general situation of given simple plane polygons $A$, $B$ such that $A\subset B$. The algorithm does not use previous triangulation or similar decompositions. Each input polygon is given as ordered set of its vertices. Our algorithm adopts some basic ideas of the algorithm published in \cite{GiselaDCGI} but presents essential corrections and improvements. A preliminary version of our algorithm was developed in \cite{Hugo}.

\section{Preliminaries}
\label{sect:Preliminaries}

Recall that a non-empty set $S\subset\mathbb{R}^2$ is \textit{convex} if for any $p,q\in S$, the straight line segment $\overline{pq}$
is contained in $S$, where $\overline{pq}$ is the set of all points $r=\lambda_1 p + \lambda_2 q$ such that $\lambda_1, \lambda_2\in \mathbb{R}$, $\lambda_1,\lambda_2\geq 0$, $\lambda_1 + \lambda_2 =1$. The \textit{convex hull} of $S$ denoted by $CH(S)$, is the intersection of all convex sets which contain $S$. Equivalently, $CH(S)$ is the set of all points which belong to straight line segments with end points in $S$. For basic topological notions we refer to \cite{Munkres}, we will denote the (topological) interior of $S$ by $int(S)$ and its frontier by $fr(S)$. A non-convex set is distinct from its convex hull via the presence of holes or cavities: Any bounded connected component of $(\mathbb{R}^2\setminus S)$ is a \textit{hole} of $S$. The closure of any connected component of $(CH(S)\setminus S)$ which is not a hole of $S$, is a \textit{cavity} of $S$.

\begin{figure}
\centering
\includegraphics[height=2.8cm]{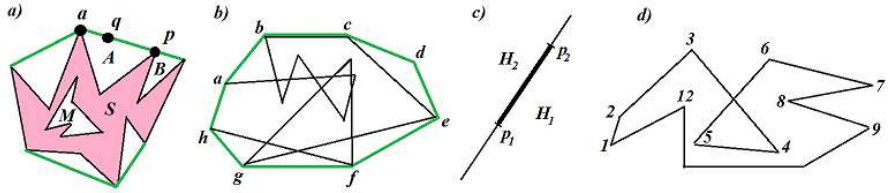}
\caption{a) $M$ is a hole of $S$, the cavities $A,B$ are distinct although they share the point $p$. The straight line segment $\overline{aq}$  is not a cover of the cavity $A$ although it belongs to $(fr(A)\setminus fr(S))$, $\overline{ap}$ is the cover of $A$. b) A polyline and its convex hull given by the vertex sequence $\langle a,b,c,d,e,f,g,h\rangle $. c) Right and left halfplanes determined by $\protect\overrightarrow{p_1p_2}$. d) Points 1,2,5,6,7,9 are examples of convex vertices (right turns), points 8 and 12 are concave vertices (left turns) of the closed polyline traced in clockwise sense.} \label{fig:2}
\end{figure}

A \textit{curve} $\gamma = \{ f(s)= (x(s), y(s))\in\mathbb{R}^2 :\  s\in [0,1] \} $ ($f:[0,1]\rightarrow \mathbb{R}^2$ continuous), is closed if $f(0)= f(1)$, simple if for any $s,t\in [0,1]$ such that $0\leq s<t<1$ it follows $f(s)\neq f(t)$; $\gamma $ is a \textit{Jordan curve} if it is simple and closed. A Jordan curve $\gamma $ separates the plane into two uniquely defined open disjoint regions: the \textit{interior of the Jordan curve} is bounded and encircled by $\gamma $, and the exterior of the Jordan curve is not bounded \cite{Munkres}. A curve is named \textit{polyline} if there exists a finite sequence of points $\{ s_0, s_1, s_2, \cdots , s_k\} $, with $0= s_0< s_1< s_2< \cdots < s_k=1$ such that all curve segments $\{ f(s): s_i\leq s\leq s_{i+1}\} $ ($i=0,1, \cdots ,k-1$) are straight line segments. The points $\{ s_0, s_1, s_2, \cdots , s_k\} $ are named vertices whenever no three consecutive points are collinear. A polyline is uniquely determined by the sequence of its vertices. A closed polyline corresponds to a closed curve, a simple polyline is a simple curve. A vertex $p$ of a polyline $\gamma $ is called \textit{extreme vertex} if its $x$-coordinate is extreme (that is, maximal or minimal) among the $x$-coordinates of all vertices of $\gamma $ or, if its $y$-coordinate is extreme among all $y$-coordinates of vertices of $\gamma $. Any extreme vertex of a polyline $\gamma $ is a vertex of the convex hull $CH(\gamma )$. A \textit{simple polygon} is defined as any non-empty bounded closed set $P\subset\mathbb{R}^2$ whose frontier forms a simple closed polyline. Hence the frontier of a simple polygon is a Jordan curve and can be represented by the finite cyclic sequence of its vertices. The convex hull of a simple polygon coincides with the convex hull of the finite set of its vertices. A simple polygon does not have holes, therefore it is non-convex if and only if it has at least one cavity. For any non-convex simple polygon $S$ in the plane and any cavity $M$ of $S$, define the \textbf{cover} of $M$ as straight line segment of maximal length belonging to $fr(M)\setminus S$. The requirement of maximal length guarantees that the cover for each cavity $M$ is unique, see Figure \ref{fig:2}a). For any ordered triple of points $p_1=(x_1,y_1)$, $p_2=(x_2,y_2)$, $p_3=(x_3,y_3)$ in the plane, its orientation is characterized by the sign of the determinant
$D(p_1, p_2, p_3) = x_1y_2 + y_1x_3 + x_2y_3 - ( x_3y_2 + x_2y_1 + x_1y_3)$. The oriented line segment $\overrightarrow{p_1p_2}$ defines an oriented line which separates $\mathbb{R}^2$ into a right halfplane $H_1$ and a left halfplane $H_2$, see Figure \ref{fig:2}c). $(p_1,p_2, p_3)$ forms a \textit{right turn} if $p_3\in H_1$, $(p_1,p_2, p_3)$ forms a \textit{left turn} if $p_3\in H_2$. Using the standard cartesian coordinate system in the plane, for a closed (simple) polyline $L$ traced in clockwise sense, see Figure \ref{fig:2}d), for any three consecutive vertices $p_1, p_2, p_3$ of $L$ we have the following: $(p_1, p_2, p_3)$ forms a right turn if and only if $D(p_1, p_2, p_3)<0$; then $p_2$ is called a \textit{convex vertex}; $(p_1, p_2, p_3)$ forms a left turn if and only if $D(p_1, p_2, p_3)>0$, then $p_2$ is called a \textit{concave vertex}. $p_1, p_2, p_3$ are collinear points if and only if $D(p_1, p_2, p_3)=0$.

\section{Definition and Properties of the Relative Convex Hull}

\begin{definition}
\label{def:RCH}
Let $A,B\subset \mathbb{R}^n$ be non-empty sets such that $A\subseteq B$. Then $A$ is called \textbf{$B$-convex} if any straight line segment lying in $B$ whose both end points belong to $A$, is contained in $A$. The \textbf{relative convex hull of $A$ with respect to $B$}, denoted by $CH_B(A)$, is defined as the intersection of all $B$-convex sets which contain $A$.
\end{definition}

It is evident that each set $A$ is $A$-convex, and that if $A$ is convex and $A\subset B$ then $A$ also is $B$-convex. The following properties can be derived from the definitions of $CH(A)$ and $CH_B(A)$:

\begin{lemma}
\label{lemma:RCHproperties}
\noindent (i) $A\subset CH_B(A)\subset B$, $B$ is the largest $B$-convex set which contains $A$ whereas $CH_B(A)$ is the smallest such set.

\noindent (ii) $CH_B(A)\subset CH(A)$.

\noindent (iii) $A$ is convex if and only if $CH_B(A)= CH(A) =A$.

\noindent (iv) $CH(A)\subset B$ if and only if $CH_B(A) = CH(A)$.

\noindent (v) If $B$ is convex then $CH_B(A) = CH(A)$.
\end{lemma}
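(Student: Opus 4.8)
The plan is to prove the five properties directly from Definition \ref{def:RCH}, treating them in the order stated since several later parts reuse earlier ones. The central objects are the classes $\mathcal{C} = \{ S : S \text{ is } B\text{-convex}, A \subseteq S \subseteq B\}$ and the class of all convex supersets of $A$; I would keep in mind throughout that $CH_B(A)$ is the intersection of the first class while $CH(A)$ is the intersection of the second.

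\begin{proof}
For (i), first I would verify that $B$ is itself $B$-convex: any segment with both endpoints in $B$ and lying in $B$ is trivially contained in $B$, so $B\in\mathcal{C}$, and clearly any $B$-convex superset of $A$ is by hypothesis a subset of $B$, so $B$ is the largest member of $\mathcal{C}$. Since $CH_B(A)$ is the intersection of all members of $\mathcal{C}$ and each such member contains $A$ and is contained in $B$, we get $A\subseteq CH_B(A)\subseteq B$; that $CH_B(A)$ is the smallest element of $\mathcal{C}$ requires checking that the intersection of $B$-convex sets is again $B$-convex, which follows because a segment in $B$ with endpoints in the intersection lies in each member of $\mathcal{C}$ hence in the intersection. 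For (ii), I would observe that every convex set containing $A$ and intersected with $B$ yields a $B$-convex set; more directly, $CH(A)\cap B$ is $B$-convex (a segment in $B$ with endpoints in $CH(A)\cap B$ has its endpoints in the convex set $CH(A)$, so the segment lies in $CH(A)$, and it lies in $B$ by assumption, hence in the intersection). Thus $CH(A)\cap B\in\mathcal{C}$, so $CH_B(A)\subseteq CH(A)\cap B\subseteq CH(A)$.

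For (iii), if $A$ is convex then $A=CH(A)$ and, as noted in the remark preceding the lemma, $A$ is $B$-convex, so $A\in\mathcal{C}$ and being the smallest superset of $A$ in $\mathcal{C}$ forces $CH_B(A)=A$; combined with (ii) and $A\subseteq CH(A)$ this gives the chain of equalities. Conversely, if $CH_B(A)=CH(A)=A$ then $A=CH(A)$ already says $A$ is convex. For (iv), the nontrivial direction is that $CH(A)\subseteq B$ implies $CH_B(A)=CH(A)$: under this hypothesis $CH(A)$ is a convex set contained in $B$, hence $B$-convex and a member of $\mathcal{C}$, giving $CH_B(A)\subseteq CH(A)$; the reverse inclusion is just (ii) restated, and in fact the intersection $CH(A)\cap B$ of (ii) equals $CH(A)$ here. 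The converse direction follows since $CH_B(A)\subseteq B$ by (i), so $CH(A)=CH_B(A)\subseteq B$. Finally (v) is the special case of (iv) where $B$ convex forces $CH(A)\subseteq B$ directly: $CH(A)$ is the smallest convex superset of $A$, and $B$ is a convex superset of $A$, so $CH(A)\subseteq B$, and (iv) applies.
\end{proof}

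The step I expect to carry the most weight is establishing in (i) that an arbitrary intersection of $B$-convex sets is again $B$-convex, since this closure property is what makes the defining intersection well-posed as the smallest $B$-convex superset; once that is in hand, parts (ii)--(v) are short deductions that hinge on exhibiting one concrete $B$-convex set (namely $CH(A)\cap B$, or $CH(A)$ itself when it happens to lie in $B$) to sandwich $CH_B(A)$ against $CH(A)$.
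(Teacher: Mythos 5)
Your parts (i), (ii), (iii) and (v) are fine, and in fact more careful than the paper: the paper dismisses (i) and (ii) with ``the definitions imply this,'' whereas your observation that $CH(A)\cap B$ is a $B$-convex superset of $A$ gives a clean, explicit witness for (ii), and your check that arbitrary intersections of $B$-convex sets are $B$-convex is the right thing to verify for (i). (You also silently repair a typo in the paper, which derives (v) from ``(iii)'' when it means (iv).)

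There is, however, a genuine gap in the forward direction of (iv). Assuming $CH(A)\subseteq B$, you exhibit $CH(A)$ as a member of $\mathcal{C}$ and conclude $CH_B(A)\subseteq CH(A)$; you then assert that ``the reverse inclusion is just (ii) restated.'' But (ii) \emph{is} the inclusion $CH_B(A)\subseteq CH(A)$ --- the same direction you already have, and one that holds unconditionally. The reverse inclusion $CH(A)\subseteq CH_B(A)$ is the entire nontrivial content of (iv), it genuinely uses the hypothesis $CH(A)\subseteq B$, and nothing in your argument establishes it. The paper proves it as follows: let $M$ be any $B$-convex set with $A\subseteq M\subseteq B$ and let $p\in CH(A)$. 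Since $A$ is connected, $p$ lies on a segment $\overline{ab}$ with $a,b\in A\subseteq M$; this segment lies in $CH(A)$, hence in $B$ by hypothesis, so $B$-convexity of $M$ forces $\overline{ab}\subseteq M$ and thus $p\in M$. Intersecting over all such $M$ gives $CH(A)\subseteq CH_B(A)$. (Equivalently, in your framework: every member of $\mathcal{C}$ contains $CH(A)$, so $CH(A)$ is not merely \emph{a} member but the \emph{smallest} member, which is what the equality requires.) Your converse direction of (iv) is correct.
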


\begin{proof}
The definitions and constructions of $CH(A)$ and $CH_B(A)$ imply (i) and (ii); (iii) follows from (ii) and since $A$ is convex if and only if $CH(A)=A$.

\noindent (iv) Suppose $CH(A)\subset B$. Because of (ii), only $CH(A)\subset CH_B(A) $ remains to be proved. Let $p\in CH(A)$ and $M\subset B$ be any $B$-convex set containing $A$. We have to prove that $p\in M$. For $p\in A$ this is trivial, so assume $p\not\in A$. Since $CH(A)$ is the set of all straight line segments having end points in $A$, $p$ belongs to some straight line segment with end points $a,b\in A$. But then $a,b$ belong also to $M\subset B$. The segment $\overline{ab}$ is contained in $CH(A)$ and hence, by the hypothesis, to $B$. Since $M$ es $B$-convex, $p\in \overline{ab}\subset M$ which completes the proof of $CH_B(A) = CH(A)$. On the other hand, $CH_B(A) = CH(A)$ means in particular that $CH(A)$ is contained in each $B$-convex set which contains $A$, but $B$ is such a set, implying $CH(A)\subset B$.

\noindent (v) $A\subset B$ with $B$ convex implies $CH(A)\subset CH(B) =B$, then (iii) gives the result.

\rightline{$\square$}
\end{proof}

As a corollary, it can be proved that a necessary condition for $CH_B(A)\neq CH(A)$ is that some concave vertex of $B$ lies in the interior of a cavity of $A$. In this paper we study the RCH only for simple polygons $A$ and $B$ in the plane, $A\subset B\subset \mathbb{R}^2$. The following properties are important for the determination of the RCH:

\begin{theorem}
\label{theorem:RCH vertex candidates}
Let $A$, $B$ be simple polygons such that $A\subset int(B)$.

\noindent (i) $CH_B(A)$ exists and is a uniquely defined simple polygon.

\noindent (ii) The frontier of the polygon $CH_B(A)$ is the Jordan curve which among all Jordan curves circumscribing $A$ and lying in $B$, has the shortest length.

\noindent (iii) Each convex vertex of $CH_B(A)$ is a convex vertex of $A$, and each concave vertex of $CH_B(A)$ is a concave vertex of $B$.

\end{theorem}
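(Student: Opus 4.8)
The plan is to prove the three parts in order, since each builds naturally on the previous. For part (i), existence and uniqueness, I would first observe that the whole polygon $B$ is itself a $B$-convex set containing $A$ (by Lemma \ref{lemma:RCHproperties}(i)), so the family of $B$-convex sets containing $A$ is non-empty and $CH_B(A)$ is a well-defined intersection. The key structural fact to establish is that the class of $B$-convex sets is closed under intersection: if $\{M_\alpha\}$ are all $B$-convex and contain $A$, then for any two points $p,q$ in $\bigcap_\alpha M_\alpha$ whose segment $\overline{pq}$ lies in $B$, that segment lies in each $M_\alpha$ (by $B$-convexity of each), hence in the intersection. This gives $B$-convexity of $CH_B(A)$ directly, making it the smallest $B$-convex set containing $A$, and uniqueness is immediate from the intersection definition. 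That it is a simple polygon I would argue from the fact that its frontier is obtained from $fr(A)$ by replacing certain cavity arcs with geodesic (taut-string) shortcuts inside $B\setminus A$, which are finite polylines; the hypothesis $A\subset int(B)$ guarantees the thread never touches $fr(B)$ except possibly at concave vertices of $B$, keeping the result a bounded closed set with simple polygonal frontier.

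For part (ii), I would invoke the taut-string characterization suggested in the abstract and Figure \ref{fig:1}a): the frontier of $CH_B(A)$ is the curve traced by a thread pulled tight around $A$ while constrained to lie in $B$. I would make this precise by arguing both inclusions. Any Jordan curve $\gamma$ that circumscribes $A$ and lies in $B$ bounds a region that is a $B$-convex set containing $A$ only after taking its relative convex closure; more directly, the region enclosed by $\gamma$ contains $A$ and lies in $B$, and the shortest such curve must coincide with $fr(CH_B(A))$ because any shortcut allowed by $B$-convexity strictly shortens the curve, while $fr(CH_B(A))$ admits no such shortcut by minimality. Conversely, $fr(CH_B(A))$ is itself a Jordan curve circumscribing $A$ and lying in $B$, so the shortest such curve can be no longer than it. The equality of the two then follows from the uniqueness of the shortest Jordan curve in this constrained homotopy class, a standard fact about taut strings in a polygonal annulus.

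For part (iii), I would analyze the local geometry at each vertex of $fr(CH_B(A))$. The frontier, traced (say clockwise) as in Figure \ref{fig:2}d), consists of straight segments meeting at convex vertices (right turns, $D<0$) and concave vertices (left turns, $D>0$). At a convex vertex $v$ of $CH_B(A)$, the taut thread bends \emph{toward} $A$, so $v$ must be a point where the thread rests against $A$; since the thread is piecewise straight and bends only at vertices of $A$ or $B$, and bending toward $A$ at a point of $A$ forces $v$ to be a convex vertex of $A$. At a concave vertex $v$ of $CH_B(A)$, the thread bends \emph{away} from $A$, which can only happen where it is pushed outward by an obstacle, namely a concave vertex of $B$ protruding into a cavity of $A$ (consistent with the corollary stated after Lemma \ref{lemma:RCHproperties}). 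The main obstacle in this argument is making the phrases ``bends toward $A$'' and ``bends away'' rigorous: I expect the hard part to be showing that a convex vertex of $CH_B(A)$ cannot occur at an interior point of a segment of $fr(A)$ nor at a concave vertex of $A$, and symmetrically for concave vertices, which requires a careful local perturbation or shortcut argument using the minimality from part (ii)—if $v$ were not of the required type, one could locally shorten the frontier while keeping it a valid separating curve in $B$, contradicting the shortest-length property.
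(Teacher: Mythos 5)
First, note that the paper does not prove this theorem at all: it states it and attributes it to Theorem~3 of \cite{SlobStoer} and Theorem~4.6 of \cite{SlobZatcoStoer}. So there is no in-paper proof to compare against; your proposal has to stand on its own, and as it stands it has real gaps.

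The closure-under-intersection argument for existence and uniqueness of $CH_B(A)$ as a \emph{set} is correct and is the right first step. The trouble begins where you pass from the set-theoretic characterization (smallest $B$-convex set containing $A$) to the metric one (shortest circumscribing Jordan curve in $B$), which is the actual content of (i) and (ii). You assert that the region bounded by the shortest curve coincides with $CH_B(A)$ ``because any shortcut allowed by $B$-convexity strictly shortens the curve,'' but inclusion-minimality among $B$-convex sets and length-minimality among Jordan curves are different minimality notions, and the equivalence is precisely what must be proved; a region bounded by a circumscribing Jordan curve in $B$ need not be $B$-convex, and conversely a $B$-convex region need not a priori have rectifiable, let alone polygonal, frontier. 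You also never establish that a length-minimizing curve \emph{exists} (this needs a compactness argument, or a reduction showing any competitor can be replaced by a polygonal one with vertices among the vertices of $A$ and $B$, after which the class is finite). Finally, your argument that $CH_B(A)$ is a simple polygon in (i) already invokes the taut-string/geodesic-shortcut picture that (ii) is supposed to justify, so the logical order is circular as written.

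For (iii), the local shortcut idea is the standard and correct one: at a convex vertex $v$ of $fr(CH_B(A))$ not lying in $A$, cutting the corner by a short chord stays inside $CH_B(A)\subset B$ and still encloses $A$, contradicting minimality, so $v\in fr(A)$; then the wedge at $v$ has interior angle less than $\pi$ and must contain $A$ locally, which rules out $v$ being an edge-interior point or a concave vertex of $A$. The symmetric argument at a concave vertex (shortcut on the exterior side, which is possible unless $v\in fr(B)$, using $A\subset int(B)$ to keep the perturbed curve in $B$) gives the second half. You correctly identify this as the hard part but leave it as a gesture; note that the whole of (iii) depends on (ii), so until the length-minimality characterization is actually established, (iii) is not proved either. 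If you intend to supply a full proof rather than cite \cite{SlobStoer} and \cite{SlobZatcoStoer} as the paper does, the existence of the minimizer and the equivalence of the two minimality notions are the two statements you must add and prove explicitly.
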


This was given by Theorem 3 from \cite{SlobStoer} and Theorem 4.6 from \cite{SlobZatcoStoer}. When the condition is weakened to $A\subset B$ then the polygon $CH_B(A)$ is simple or weakly simple, that means, its frontier can touch itself but does not cross itself, and the other properties are still valid \cite{ToussGeodesic}.

\begin{theorem}
\label{theorem:All-vertices-CH(A)}
For simple polygons $A$, $B$ such that $A\subset B$, all vertices of $CH(A)$ are vertices of $CH_B(A)$.
\end{theorem}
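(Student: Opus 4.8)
The plan is to exploit the nesting $A \subset CH_B(A) \subset CH(A)$ together with the fact that a vertex of the convex polygon $CH(A)$ is an \emph{exposed} point, i.e. it admits a supporting line meeting $CH(A)$ in that single point. Fix a vertex $v$ of $CH(A)$. Since $CH(A)$ is the convex hull of the finite vertex set of $A$, the point $v$ is an extreme point of that finite set and hence a (convex) vertex of $A$; in particular $v \in A$, and by Lemma~\ref{lemma:RCHproperties}(i) we get $v \in A \subset CH_B(A)$. Thus $v$ is already known to be a point of the relative convex hull, and all that remains is to show it is a genuine corner of its frontier.

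First I would produce a good supporting line. Because $v$ is a vertex of the convex polygon $CH(A)$, the two edges of $CH(A)$ incident to $v$ enclose an interior angle strictly smaller than $\pi$; choosing the line $\ell$ through $v$ perpendicular to the external angle bisector of that angle yields a line with $CH(A)$ contained in one of the closed halfplanes $H$ bounded by $\ell$, and with $\ell \cap CH(A) = \{v\}$. Next I would transfer this supporting line to $CH_B(A)$: by Lemma~\ref{lemma:RCHproperties}(ii) we have $CH_B(A) \subset CH(A) \subset H$, so $\ell$ supports $CH_B(A)$ at $v$ as well, and $\ell \cap CH_B(A) \subset \ell \cap CH(A) = \{v\}$ forces $\ell \cap CH_B(A) = \{v\}$. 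In particular $v$ lies on $fr(CH_B(A))$.

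To finish I would use the polygonal structure. By Theorem~\ref{theorem:RCH vertex candidates}(i), $CH_B(A)$ is a simple polygon, so every boundary point is either a vertex or lies in the relative interior of a unique edge. I would rule out the second alternative by contradiction: if $v$ were interior to an edge $e$ of $CH_B(A)$, then $e \subset CH_B(A) \subset H$ is a segment having the interior point $v$ on the boundary line $\ell$ of the halfplane $H$, which forces the whole segment $e$ to lie on $\ell$; but then $e \subset \ell \cap CH_B(A) = \{v\}$ is impossible for a nondegenerate edge. Hence $v$ is a vertex of $CH_B(A)$.

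The main obstacle is precisely this last step --- converting the metric information ``$v$ is an exposed boundary point'' into the combinatorial conclusion ``$v$ is a vertex of the polygon'' --- and it is exactly here that the simple-polygon structure from Theorem~\ref{theorem:RCH vertex candidates}(i) is essential. The same reasoning also covers the weakly simple case obtained when the hypothesis is relaxed to $A \subset B$, since an exposed point at which the frontier is locally contained in $H$ and meets $\ell$ only at $v$ cannot be a self-touching point and must be a corner. Everything else --- the existence of the supporting line and the inclusions --- is routine once the convexity of $CH(A)$ and the nesting of the three hulls are in place.
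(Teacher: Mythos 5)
Your proof is correct and follows essentially the same route as the paper's: both rest on the chain $A\subset CH_B(A)\subset CH(A)$ and on transferring a supporting halfplane of $CH(A)$ at the vertex $v$ down to $CH_B(A)$. The only difference is cosmetic --- you use a single strictly supporting line exposing $v$, whereas the paper uses the two edge-halfplanes of $CH(A)$ adjacent to $v$ --- and your final step ruling out that $v$ lies in the relative interior of an edge is, if anything, more explicit than the paper's.
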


\begin{proof}
Any vertex of $CH(A)$ belongs to $A\subset CH_B(A)$. To prove that any vertex of $CH(A)$ is a vertex of $CH_B(A)$, we apply Lemma \ref{lemma:RCHproperties}(i) and the well-known fact that any convex simple polygon is a finite intersection of halfplanes which are determined by the straight lines generated by the polygon edges. The convex simple polygon $CH(A)$ has $k\geq 3$ vertices $a_1, a_2, \cdots , a_k$, where no three consecutive points are collinear, and $k$ edges $\overline{a_1a_2}$, $\overline{a_2a_3}$, $\cdots $,  $\overline{a_ka_1}$. Supposing a clockwise tracing of the Jordan curve $fr(CH(A))$, let $H_i$ be the right halfplane  of the oriented straight line generated by the line segment $\overrightarrow{a_ia_{i+1}}$ for $i=1, 2, \cdots k-1$, and $H_k$ be the right halfplane of $\overrightarrow{a_ka_1}$. Then $CH(A)= H_1 \cap H_2 \cap \cdots \cap H_k$, and all these halfplanes are pairwise distinct. For any vertex $a_i$ of $CH(A)$, $a_{i-1}, a_i, a_{i+1}$ belong to $A\subset CH_B(A)\subset CH(A)\subset H_{i-1} \cap H_i \cap H_{i+1}$. This implies that $a_{i-1}, a_i, a_{i+1}\in fr(CH_B(A))$ and that $CH_B(A)$ cannot contain elements of the straight line generated by the segment $\overline{a_i a_{i+1}}$ but lying outside this segment. In consequence, in particular $a_i$ is a vertex of $CH_B(A)$. Note that the argument of our proof is independent of a discussion weather $\overline{a_i a_{i+1}}$ belongs to $B$ or not.

\rightline{$\square$}
\end{proof}

The last theorem was briefly mentioned on p.126 of \cite{SlobZatcoStoer} without proof, and it was stated in \cite{GiselaDCGI} with a wrong proof.

\section{Previous Algorithms of Determining the Relative Convex Hull for Simple Polygons in the Plane}
\label{sect:known algs}

G. Toussaint proposed in \cite{ToussOptAlg},\cite{ToussGeodesic} to transform the problem of determining $CH_B(A)$ into the problem of finding the shortest path between two vertices of a new simple polygon which first is triangulated. That algorithm has linear time complexity in terms of the total number $k$ of vertices of $A$ and $B$, but it makes essential use of the triangulation of $M$ which can be achieved by a complicated process in $\mathcal{O}(k\, log(log(k)))$ time. In \cite{SlobStoer} and \cite{SlobZatcoStoer} several ideas for algorithms to determine the RCH were suggested, which are based on decompositions of the polygons such as trapezoidation or partition into pseudomonotone polygons. In the context of digital curve analysis, some algorithms not based on triangulations for calculating the MPP or MLP are known, for example \cite{SklanskyKibler},\cite{SklanskyMPPSilh},\cite{KletteKovYip},\cite{DigGeomBook},\cite{LiKletteBook},\cite{Provencal}, but these algorithms solve the RCH problem only for special difference sets $(B\setminus int(A))$ such as grid continua or polyominoes or special cell complexes.

The algorithm published in \cite{GiselaIEEE},\cite{GiselaDCGI},\cite{GiselaJournal},\cite{LiKletteBook} starts with calculating the convex hulls of $A$ and $B$. The list of vertices of $CH(A)$ is completed by inserting vertices from cavities of $B$ until the output list of all vertices of $CH_B(A)$ is obtained. The construction of the output list follows a recursive process which searches for intersections of cavities of $A$ and $B$. Whenever such intersection is detected, a new outer polygon $O$ and a new inner polygon $I$ are formed, and the problem of finding $CH_O(I)$ is treated to obtain missing RCH vertices of $CH_B(A)$. Subsequently, the recursive process works in each step with smaller newly generated outer and inner polygons and calculates their convex hulls. The author affirms that after sufficiently many recursion steps, the base case of the recursion is achieved where the new inner polygon is a triangle. The idea of such a recursive process was first suggested by two theorems on the shortest path between two vertices of a polygon and a series of drawings on p.122-124 in \cite{SlobZatcoStoer} where the explanation was not detailed at all. In certain situations, the algorithm from \cite{GiselaDCGI} does not produce the correct result of all vertices of $CH_B(A)$. The reason for this lies in the geometric nature of the RCH problem for general simple polygons $A, B$ which was oversimplified in \cite{GiselaDCGI}; its recursion is theoretically not justified. The new polygon $I$ sometimes is not contained in $O$ or is not a simple polygon. For finding the missing vertices, additional regions have to be investigated in each step. It is also possible that the process stops when $I$ becomes convex but is not a triangle.

\section{A New Algorithm of Determining the Relative Convex Hull for Simple Polygons in the Plane}
\label{sect:new alg}

\subsection{Vertex Lists, Convex Hull Determination and Cavity Detection}

The new algorithm will be explained with the help of the example shown in Figure \ref{fig:ExamplePolygons}. The input data consist of two simple polygons $A$, $B$ satisfying $A\subset B$, given as ordered sequence of vertices: $A=\langle p_1, p_2, \dots, p_n \rangle $, $B=\langle q_1, q_2, \dots, q_m \rangle $ representing the frontier of each polygon due to the clockwise tracing. We suppose $p_1$ as an extreme vertex of $A$, $q_1$ extreme for $B$ which can be achieved by a simple pre-processing of both lists. Hence $p_1$ is a vertex of $CH(A)$ and hence of $CH_B(A)$, by theorem \ref{theorem:All-vertices-CH(A)}. The algorithm produces an ordered list of all vertices of $CH_B(A)$ as output data, starting with $p_1$ and corresponding to a clockwise tracing of the frontier of $CH_B(A)$.

\begin{figure}
\centering
\includegraphics[height=4cm]{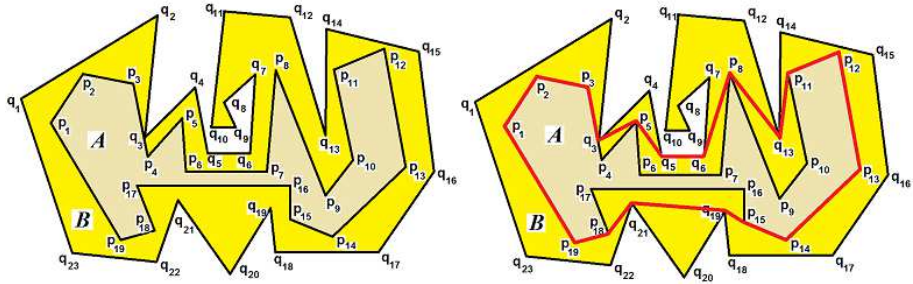}
\caption{Left: example of input data given by an inner polygon $A$ and an outer polygon $B$. Right: the sides of $CH_B(A)$ are marked by heavy red lines.}
\label{fig:ExamplePolygons}
\end{figure}

Our algorithm starts with determining all vertices of the convex hulls of both $A,B$ which are stored in the vertex lists $CH(A), CH(B)$, respecting the clockwise tracing. This can be done for example by the efficient Melkman-Algorithm \cite{Melkman}. As a particularity of this algorithm, the last vertex which was confirmed as vertex of the convex hull and hence appears at the end of the output list, is repeated in that list as first point, we eliminate this first point from the list. So we obtain the vertex list $CH(A)$ starting with $p_1$ and containing a selection of points from the list $A$ whose original ordering and internal indices are preserved, similarly for $CH(B)$ starting with $q_1$. The vertex list $CH(A)$ is considered as initial output list of the vertices of the RCH. By subsequent steps of our algorithm, all other RCH vertices are found and inserted into this list $CH(A)$ at appropriate positions. Therefore, the format of a double ended queue owned by the vertex list $CH(A)$ as output of the Melkman-Algorithm, cannot be preserved during subsequent steps of our method. We apply later again the Melkman-Algorithm \cite{Melkman} which produces the vertex list of the convex hull for any input vertex list of a polyline not necessarily closed or forming a simple polygon, and it always respects the order in the input vertex list.

In each vertex list $A$, $B$, $CH(A)$, $CH(B)$, we copy its first point as added at the list end but having a new index. This permits to study all sides of each polygon, including the line segment connecting the last vertex with the first one, without producing errors in the indices when performing our algorithm. For our example, this produces $CH(A) = (p_1, p_2, p_{12}, p_{13}, p_{14}, p_{19}, p_{20}=p_1)$, $CH(B) = (q_1, q_2, q_{11}, q_{12}, q_{14}, q_{15}, q_{16}, q_{17}, q_{20}, q_{23}, q_{24}=q_1)$.

Since each point of the vertex list $CH(A)$, besides having an $CH(A)$-index $i$, also preserves its original index from the vertex list $A$, a cavity of the polygon $A$ is easily detected during tracing the list $CH(A)$: When consecutive vertices have a difference strictly mayor than $1$ between their own indices, $CH(A)_i=p_k$, $CH(A)_{i+1}=p_l$, and $|k-l|\geq 2$, then $A$ has a cavity whose cover is given by the line segment $\overline{p_kp_l}$. Cavities of $B$ can be detected in the same manner from the list $CH(B)$. This idea was adopted from \cite{GiselaDCGI}. In our example, $i=2$ indicates that $\overline{p_2p_{12}}$ is the cover of a first cavity of $A$.

\subsection{Processing of one Cavity}

As in \cite{GiselaDCGI}, whenever a cavity of $A$ is found, it is considered as a \textit{new polygon $O$} determined by its vertices $\langle CH(A)_i=p_k, p_{k+1}, p_{k+2}, \cdots ,$ $p_{k+r}= p_{l}= CH(A)_{i+1}\rangle $ for some $r\geq 1$ which always is a simple closed polyline in counterclockwise order. For our example, $i=2$, $O= \langle p_2, p_3, p_4, \cdots , p_{11}, p_{12}\rangle $.

The next step is to construct a \textit{new polyline $I$} whose convex hull, if it has at least three vertices, provides vertices of $B$ which are vertices of the RCH and should be inserted in the list $CH(A)$ between $CH(A)_{i}$ and $CH(A)_{i+1}$. Let $I$ be the sequence starting with $CH(A)_{i+1}$, $CH(A)_i$ and then containing all vertices from $B$, in the same order as in $B$, which belong to the set $(O \setminus \overline{CH(A)_iCH(A)_{i+1}})$ which is the polygon $O$ with exception of its cover $\overline{CH(A)_iCH(A)_{i+1}}$. Only in the case that all those vertices selected from $B$ are vertices of the same cavity of $B$, our definition of $I$ coincides with that of \cite{GiselaDCGI}. For our example, $I= \langle p_{12}, p_{2}, q_{3}, q_{4}, q_{5}, q_{6}, q_{7}, q_{8}, q_{9}, q_{10}, q_{13} \rangle $ represents a closed polyline in counter-clockwise sense, but it does not form a simple polygon, and the curve is not completely contained in $O$. All points $q_k$ of $I$ with exception of $q_{13}$ belong to the same cavity of $B$.

The Melkman-Algorithm \cite{Melkman} is applied to determine the convex hull of $I$. In our example, this produces the output $CH(I)= \langle q_{13}, p_{12}, p_{2}, q_{3}, q_{5}, q_{6}, q_{13}\rangle $. After eliminating the first point which is repeated and the end points of the cover which already belong to $CH(A)$, we obtain the following new vertices which will be inserted into the list $CH(A)$: $q_{3}$, $q_{5}$, $q_{6}$, $q_{13}$. The updated list $CH(A)$ then contains vertices both from $A$, $B$: $CH(A)= (p_1, p_2, q_{3}, q_{5}, q_{6}, q_{13}, p_{12}, p_{13}, p_{14}, p_{19}, p_{20})$. This current list $CH(A)$ represents two special line segments, each one connecting a vertex from $A$ with a vertex from $B$: $\overline{p_2q_{3}}$ and $\overline{q_{13}p_{12}}$. We will use these segments to form polylines whose convex hulls will provide eventually missing vertices of the RCH. These polylines were not defined or used in the algorithm of \cite{GiselaDCGI}.

\begin{definition}
Let $b_1, b_2, \cdots , b_k$ be the vertices of $CH(I)$ which were inserted into $CH(A)$ at the index $i$ due to the procedure described above in order to generate the current list
\[
CH(A)= (CH(A)_{1}, CH(A)_{2}, \cdots , CH(A)_{i}, b_1, b_2, \cdots , b_k, CH(A)_{i+k+1}, \cdots )\  .
\]
Define a \textbf{starting $O$-polygon} $O_S$ by the vertex sequence starting with $CH(A)_{i}$, $CH(A)_{i+1}$ and then containing all vertices which in the vertex list $B$ are previous to $CH(A)_{i+1}=b_1$, copying them in reversed order, until the first vertex which lies outside $O$. Let $I_S$ be the polyline starting with $CH(A)_{i+1}$, $CH(A)_{i}$ and then containing all vertices from the vertex list $A$, copying their ordering, which belong to $(O_S\setminus \overline{CH(A)_{i}b_1})$.

\noindent Similarly, define an \textbf{ending $O$-polygon} $O_E$ by the vertex sequence starting with $CH(A)_{i+k+1}$, $CH(A)_{i+k}=b_k$ and then containing all vertices which in the vertex list $B$ are subsequent to $CH(A)_{i+k}=b_k$, copying their ordering, until the first vertex which lies outside $O$. Let $I_E$ be the polyline starting with $CH(A)_{i+k+1}$, $CH(A)_{i+k}=b_k$ and then containing all vertices from the vertex list $A$, copying their ordering, which belong to $(O_E\setminus \overline{b_k CH(A)_{i+k+1}})$.
\end{definition}

By this definition, $O_S$ is generated in counter-clockwise sense whereas $I_S$, $O_E$ and $I_E$ are polylines traced in clockwise sense.

\begin{lemma}
All vertices of $CH(I)$, $CH(I_S)$, $CH(I_E)$ are vertices of $CH_B(A)$.
\end{lemma}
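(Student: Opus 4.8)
The plan is to show that each vertex produced by the three convex-hull computations satisfies the characterization of Theorem~\ref{theorem:RCH vertex candidates}(iii), namely that it is a concave vertex of $B$ that genuinely obstructs the tightening of the relative convex hull inside the cavity under consideration. Throughout I would fix a single cavity of $A$, with cover $\overline{CH(A)_iCH(A)_{i+1}}$ and associated polygon $O$, and argue locally: since $A\subset B$ and the cover lies on $fr(CH(A))$, any Jordan curve circumscribing $A$ and lying in $B$ must, in the region bounded by the cover and the cavity boundary, follow the shortest path from $CH(A)_i$ to $CH(A)_{i+1}$ that stays inside $B$ and outside $int(A)$. By Theorem~\ref{theorem:RCH vertex candidates}(ii), the frontier of $CH_B(A)$ restricted to this region \emph{is} that shortest path, so it suffices to identify its vertices with the hull vertices produced by the algorithm.

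For $CH(I)$ I would proceed as follows. First I would argue that the relevant vertices of $B$ lying in $(O\setminus\overline{CH(A)_iCH(A)_{i+1}})$ are exactly the candidates for bending the shortest path away from the cover, because any vertex of $B$ outside $O$ cannot lie on a shortest path confined to $O$. Next, the Melkman computation of $CH(I)$ selects, among these candidate vertices together with the two cover endpoints, precisely those that lie on the boundary of the convex hull on the side facing the cover; these are the points through which a taut thread pulled from $CH(A)_i$ to $CH(A)_{i+1}$ inside $O$ would pass. I would then invoke the standard fact that the shortest path between two points inside a simple region, constrained to avoid the obstacle $int(A)$ and to stay in $B$, is polygonal and bends only at reflex (concave) vertices of the obstacles it hugs; here the obstacle is the cavity of $B$ protruding into $O$, whose reflex vertices relative to the path are exactly the concave vertices of $B$ picked out by the hull. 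Combining this with Theorem~\ref{theorem:RCH vertex candidates}(iii), every such selected vertex is forced to be a vertex of $CH_B(A)$.

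The arguments for $CH(I_S)$ and $CH(I_E)$ are symmetric and address the two transition segments $\overline{CH(A)_ib_1}$ and $\overline{b_kCH(A)_{i+k+1}}$ that connect an $A$-vertex to a $B$-vertex. Here the roles of $A$ and $B$ are partly interchanged: the starting and ending $O$-polygons $O_S,O_E$ are built from $B$-vertices, while $I_S,I_E$ collect the $A$-vertices lying inside them, so the convex-hull computation now detects where the shortest path, after leaving the cover region, must bend back around a convex vertex of $A$ before or after contacting $B$. I would show that any vertex returned by $CH(I_S)$ or $CH(I_E)$ lies on the frontier of $CH_B(A)$ by the same shortest-path-bends-only-at-obstacle-vertices principle, applied on the refined region bounded by the new transition segment, and that each is either a concave vertex of $B$ or a convex vertex of $A$ as required by Theorem~\ref{theorem:RCH vertex candidates}(iii); since Theorem~\ref{theorem:All-vertices-CH(A)} already guarantees all convex hull vertices of $A$ are RCH vertices, the genuinely new contributions are again concave $B$-vertices.

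The main obstacle I anticipate is not the local shortest-path reasoning but the geometric bookkeeping that justifies confining attention to $O$, $O_S$, and $O_E$: one must verify that the true frontier arc of $CH_B(A)$ over this cavity really is contained in these regions and does not escape into parts of $B$ that the algorithm never examines. In particular, because the example shows that $I$ need not be simple nor lie entirely in $O$, I would need a lemma asserting that the Melkman hull of such a possibly non-simple input still returns only points that are geodesically extreme with respect to the cover, so that no spurious vertex (one lying on the convex hull but not on the shortest path) is ever inserted. Establishing that the convex-hull operation and the geodesic shortest-path operation agree on exactly the correct vertex set, despite the non-simplicity of $I$, is where the real care is required; the remainder is a direct appeal to Theorem~\ref{theorem:RCH vertex candidates}.
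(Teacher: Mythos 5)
Your proposal follows essentially the same route as the paper's own (sketched) argument: localize to one cavity, use Theorem~\ref{theorem:RCH vertex candidates}(ii) to identify the frontier arc of $CH_B(A)$ across the cover with the shortest path between the cover endpoints inside $O\cap B$, and then identify that shortest path with the non-cover chain of $CH(I)$ --- a step the paper delegates to Theorem~4.4 of \cite{SlobZatcoStoer} (generalized to weakly simple polygons) and you instead re-derive from the taut-thread, bends-only-at-reflex-vertices principle together with Theorem~\ref{theorem:RCH vertex candidates}(iii). The only structural difference is in the treatment of $I_S$ and $I_E$, where the paper passes through the relative hull $CH_{O_S}(I_S)$ and concludes via Theorem~\ref{theorem:All-vertices-CH(A)} while you repeat the direct shortest-path argument; the difficulty you explicitly flag (that $I$ need not be simple or contained in $O$, and that the frontier arc must be shown not to escape $O$, $O_S$, $O_E$) is exactly the part the paper's ``Idea of Proof'' also leaves open.
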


\textit{Idea of Proof:} Let $O$ be a cavity of $A$ with cover $\overline{pq}$ and at least one vertex of $B$ inside $O\setminus \overline{pq}$. $O$ is a simple polygon. Due to Theorem \ref{theorem:RCH vertex candidates}(ii), all vertices of $CH_B(A)$ belonging to $R(O)$ are vertices of the shortest polygonal Jordan path which circumscribes $A$ but lies in $B$. As consequence, the polygonal subpath from $p$ to $q$  is the shortest path between $p,q$ as vertices of the weakly simply polygon $O\cap B$. By Theorem 4.4 of \cite{SlobZatcoStoer} (whose validity has to be generalized from a simple to a weakly simple polygon), this subpath is contained in $CH(I)$. Together with the fact that all vertices and edges of $CH_B(A)$ cannot intersect $int(A)$, it can be proved that each vertex of $CH(I)$ is a vertex of $CH_B(A)$. The polygons $O_S$, $O_E$, $I_S$, $I_E$ are simple and $I_S\subset O_S$, $I_E\subset O_E$. The subpath of $fr(CH_B(A))$ from $CH(A)_{i}$ to $b_1$ passing through certain vertices of $A$ (if any), is the shortest path between these vertices of the simple polygon $O_S\cap B$, it also belongs to $fr(CH_{O_S}(I_S))$. By Theorem \ref{theorem:All-vertices-CH(A)}, all vertices of $CH(I_S)$ are vertices of $CH_B(A)$; similarly for $I_E$. \textit{(End of Idea of Proof)}

The Melkman-Algorithm \cite{Melkman} is applied for calculating the lists $CH(I_S)$, $CH(I_E)$, which after eliminating the points which are repeated or already belonging to the list $CH(A)$, have to be inserted into the list $CH(A)$: new points provided by $CH(I_S)$ are inserted between $CH(A)_{i}$ and $CH(A)_{i+1}=b_1$, new points from $CH(I_E)$ are inserted between $CH(A)_{i+k}=b_k$ and $CH(A)_{i+k+1}$. In our example, $i=2$, $CH(A)_{i}=p_2$, $CH(A)_{i+1}=b_1=q_3$, $k=4$, $CH(A)_{i+k}=b_k=q_{13}$, $CH(A)_{i+k+1}=p_{12}$, $O_S=\langle p_2, q_3, q_2\rangle $, $I_S=(q_3, p_2, p_3)$ is convex and provides the new point $p_3$ to be inserted between $p_2$ and $q_3$. $O_E=\langle p_{12}, q_{13}, q_{14}\rangle $, $I_E=(p_{12}, q_{13}, p_{11})$ is convex, so that only $p_{11}$ has to be inserted between $q_{13}$ and $p_{12}$. The new list is $CH(A)= (p_1, p_2, p_3, q_{3}, q_{5}, q_{6}, q_{13}, p_{11}, p_{12}, p_{13}, p_{14}, p_{19}, p_{20})$. This completes to process the cavity of $A$ starting at the vertex with $CH(A)$-index $i$. Note that during the whole procedure just described, this starting index $i$ is not changed and points are inserted only after that index. Comparing the current list $CH(A)$ with Figure \ref{fig:ExamplePolygons} we see that within the actual cavity, more RCH vertices have to been detected, but the list $CH(A)$ will guide us naturally to discover these missing points.

\subsection{Detection and Processing of Subsequent Cavities}

The algorithm continues tracing the vertex list $CH(A)$ which has been updated by processing the cavity previously detected, increasing the $CH(A)$-index $i$ and looking for consecutive vertices whose own indices have a difference more than $1$. This test is done only for consecutive vertices which both are from $A$, or both from $B$. When two points are consecutive in $CH(A)$ but one is from $A$ and the other from $B$, then the point from $B$ was inserted as result of the treatment of the special polygons $O_S$ or $O_E$, and no more vertices of the RCH are missing between these two points.

Whenever in the list $CH(A)$ two consecutive points of $A$, $CH(A)_i=p_k$ and $CH(A)_{i+1}=p_l$, such that $|k-l|\geq 2$, are found, then $p_kp_l$ covers some kind of ``cavity" of $A$ and the whole ``Processing of one cavity" described in the previous section, is performed. This includes the analysis of the polygons and polylines $O$, $I$, $O_S$, $I_S$, $O_E$, $I_E$, resulting in an updated vertex list $CH(A)$. The same is done when such two consecutive points of $B$, $CH(A)_i=q_k$ and $CH(A)_{i+1}=q_l$, are detected, but then the ``Processing of one cavity" is applied with the roles of $A$ and $B$ interchanged (points $q_j$ instead of $p_j$ and vice versa).

In our example, the next such situation is found for $i=4$ and points of $B$: $CH(A)_4=q_3, CH(A)=q_5$. Following faithfully the procedure with roles of $A$ and $B$ interchanged, we obtain $O=\langle q_3, q_4, q_5\rangle $ which is a cavity of a cavity of $B$ with one vertex of $A$ inside, giving $I=\langle q_5, q_3, p_5\rangle $. $I$ is convex and provides only the new point $p_5$. The special segments $\overline{q_3p_5}$ and $\overline{p_5q_5}$ generate $O_S= \langle q_3, p_5, p_4\rangle $ and $O_E= \langle q_5, p_5, p_6\rangle $ which both do not contain vertices of $B$, hence $I_S= \langle p_5, q_3\rangle $ and $I_E= \langle q_5, p_5\rangle $ are degenerated to line segments and do not provide more points to be inserted into the vertex list. We obtain as current list $CH(A) = (p_1, p_2, p_3, q_{3}, p_5, q_{5}, CH(A)_7=q_{6}, q_{13}, p_{11}, p_{12}, p_{13}, p_{14}, p_{19}, p_{20})$.

The next jump in the indices is found at $i=7$ again with points from $B$: $CH(A)_7=q_{6}, CH(A)_7=q_{13}$. We should be careful using geometrical concepts, the segment $\overline{q_{6}q_{13}}$ covers some part of $B$ which is neither a cavity nor a cavity of a cavity of $B$. We obtain $O= \langle q_6, q_7, q_8, q_9, q_{10}, q_{11}, q_{12}, q_{13}\rangle $, $I= \langle q_{13}, q_6, p_8 \rangle $ which is convex and provides only the new point $p_8$. $O_S, O_E$ are not interesting since $I_S, I_E$ degenerate to line segments and do not provide more points:
$CH(A)= (p_1, p_2, p_3, q_{3}, p_5, q_{5}, q_{6}, p_8, q_{13}, p_{11}, p_{12}, p_{13}, p_{14}, p_{19}, p_{20})$. The next pair of points to be treated is found as $p_{14}, p_{19}$, where the polylines $O, I$ provide the new RCH vertices $q_{19}$ and $q_{21}$, and then we need $O_S, I_S$ to discover $p_{15}$ and also $O_E, I_E$ to detect the last RCH vertex $p_{18}$ which completes the correct determination of the RCH shown in Figure \ref{fig:ExamplePolygons}.

\subsection{Pseudocode, Implementation, and Complexity}

Figure \ref{Bsp:Hugo} shows an example where the RCH was calculated by our algorithm implemented in Matlab R2012a. The example was designed in \cite{Hugo} to contain several interesting situations, such as a convex cavity of $A$ with vertices of $B$ inside, a non-convex cavity of $A$ with vertices of $B$ inside, vertices of $A$ inside interesting parts of $B$, a part of $fr(B)$ collinear with the frontier of a cavity of $A$. In the left part of each figure, both polygons $A, B$ are isothetic and the difference set $(B\setminus A)$ looks like a grid continuum, such that in this part we apply our algorithm to solve the MLP problem. The figure shows that the RCH problem, even for the MLP case, cannot be solved by the recursion of \cite{GiselaDCGI}.

Figures \ref{CodeMain}, \ref{CodeProcedure} present a pseudocode of our algorithm which is not yet optimized. To estimate the time complexity of our method, suppose that the input polygons $A$ and $B$ have $n$ and $m$ vertices, respectively. Not only the Melkman-Algorithm is applied and computes the convex hull of any polyline given as ordered sequence of $k$ vertices in linear time $\mathcal{O} (k)$. In several steps, our method needs to decide whether a point belongs to the right or left halfplane of a straight line segment, where the determinant described in Section \ref{sect:Preliminaries} is used. Also it has to be determined whether a point lies inside or outside a simple polygon given by its vertex list. When this list corresponds to a clockwise order tracing, then a point is inside the polygon if it belongs to the right halfplanes of all polygon edges. Such verifications are needed in our algorithm for polygons given by small subsets of vertices of $A, B$, so that their time complexity can be considered as linear in dependance of $m+n$.

Up to three distinct convex hulls have to be computed for each ``cavity" intersection of $A$ and $B$. $A$ has a maximum number of $\lfloor n/2\rfloor$ cavities. Each such cavity of $A$ could have vertices of $B$ inside. These vertices belong to the set of concave vertices of $B$ which could have almost $m$ elements. This gives a quadratic time complexity in the worst case. Another problem is the possible existence of interleaved and interlaced cavities within other cavities. Although our algorithm is not recursive but iterative, each cavity lying inside another cavity, when not treated immediately, is detected later when tracing the updated vertex list $CH(A)$ and then treated. So, as also observed in \cite{GiselaDCGI}, only in cases when the ``deepness" of such ``stacked cavities" is bounded by some constant and the cavities in general are ``well distributed" then our algorithm can present a nearly linear time complexity behaviour.

\section{Conclusion and Future Work}
This paper proposes an algorithm for the determination of the list of all vertices of the relative convex hull, for the general situation of given simple plane polygons $A,B$ such that $A\subset B$. This algorithm does not use triangulation or similar decompositions of the difference set between $B$ and $A$ as preprocessing. The ordered input vertex sequences of $A$ and $B$ are processed going forward to generate the output list of vertices of $CH_B(A)$ by inserting points iteratively into the list of vertices of the convex hull of $A$.

Near future work previews to complete the formal proof of correctedness of our algorithm and the solution of some pendent details such as the insertion of the new vertices found from the convex hull of the polyline $I$ into the current vertex list $CH(A)$ in the ``correct" order, or the treatment of the presence of collinear (non-consecutive) vertices of $A$ or $B$, a situation which interestingly is forbidden for algorithms based on triangulation \cite{ToussGeodesic}.

\begin{figure}
\centering
\includegraphics[height=7.4cm]{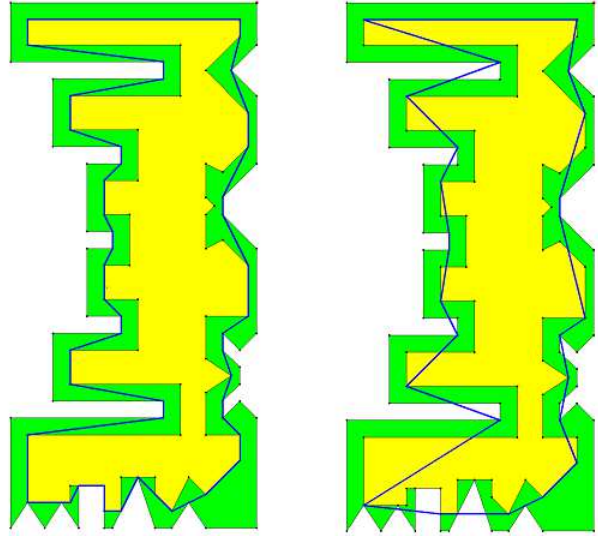}
\caption{The relative convex hull determined by the new algorithm implemented in Matlab, and by the algorithm of \cite{GiselaDCGI}, for an example developed in \cite{Hugo}.}
\label{Bsp:Hugo}
\end{figure}

\vskip0.2cm
\noindent{\bf Acknowledgement:}
The first author gratefully acknowledges support for this research from SEP and CONACYT Mexico, grant No. CB-2011-01-166223. The authors would like to thank very much to the reviewers for their careful study of the work, and for their constructive criticism and helpful comments which were important to improve the presentation of the paper.

\begin{figure}
\begin{algorithm}[H]
\begin{algorithmic}[1]

\Require Simple polygons $A$, $B$ with $A \subset B$ given by vertex lists $A=\langle p_1, p_2, \dots, p_n
\rangle $, $B=\langle q_1, q_2, \dots, q_m \rangle $ (clockwise traced).

\Ensure List of vertices of the relative hull $CH_B(A)$ stored in the actualized list $CH(A)$ (clockwise traced).

\hrule
\Statex
\Statex

\State Initialize $CH(A)=\emptyset$, $i=1$.

\State Determine the vertices of convex hull of $A$ by the Melkman algorithm
stored in the list $CH(A)$ which has $s$ elements.

\State Eliminate the first element of $CH(A)$.

\State Extend the lists $A$, $B$ y $CH(A)$ adding at the end a copy of the first element with new index.

\While{$i<s$}

\If {a cavity is detected between $CH(A)_i$ and $CH(A)_{i+1}$,}

\If {The cavity is between points of $A$,}

\State CAV$(CH(A),i,s,A,B)$ \Else \State CAV$(CH(A),i,s,B,A)$

\EndIf

\EndIf

\State i=i+1

\EndWhile

\State The actualized list $CH(A)$ contains all vertices of the relative convex hull $CH_B(A)$.

\end{algorithmic}
\end{algorithm}
\caption{Pseudocode of the new RCH algorithm (Main program).}
\label{CodeMain}
\end{figure}


\begin{figure}
\begin{algorithm}[H]
\begin{algorithmic}[1]

\Statex

\Procedure {CAV}{$CH(A)$, i, s, P1, P2}

\State Initialize local variables $u=0, v=0, w=0$.

\State Generate the polygon $O$ by all vertices of P1,
from $CH(A)_i$ up to $CH(A)_{i+1}$.

\State Form the polyline $I$ by $CH(A)_{i+1}$, $CH(A)_i$, and all vertices of $P2$ inside $O$ or collinear with its frontier with exception of the cover
$\overline{CH(A)_{i+1}CH(A)_i}$. $I$ has $N$ vertices.

\If {$N>2$}

\State Determine the list $CH(I)$ which has $S$ vertices.

\State Insert between $CH(A)_i$ and $CH(A)_{i+1}$ the vertices of $CH(I)$,
with exception of the first point and $CH(A)_i$, $CH(A)_{i+1}$.

\State u=S-3

\State s=s+u

\State Generate the polygon $O_S$ by $CH(A)_i$,
$CH(A)_{i+1}$ and all vertices of $P2$ previous to $CH(A)_{i+1}$
up to a first point found outside $O$.

\State Form the polygon $I_S$ by $CH(A)_{i+u+1}$, $CH(A)_{i+u}$ and all vertices of $P1$ inside
$O_S$ or collinear with its frontier (with exception of the line segment $\overline{CH(A)_{i+u+1}CH(A)_{i+u}}$). $I_S$ has $N_S$
vertices.

\If {$N_S>2$}

\State Determine the list $CH(I_S)$ which has $S$ elements.

\State Insert between $CH(A)_i$ and $CH(A)_{i+1}$ the elements of $CH(I_S)$ with exception of the first one and $CH(A)_{i+u+1}$ and $CH(A)_{i+u}$.

\State v=S-3

\State s=s+v

\State w=u+v

\EndIf

\State Generate the polygon $O_E$ by $CH(A)_{i+w+1}$,
$CH(A)_{i+w}$ and all vertices of $P2$ subsequent to
$CH(A)_{i+w}$ up to the first point found outside $O$.

\State Form the polygon $I_E$ by $CH(A)_{i+w+1}$,
$CH(A)_{i+w}$ and all vertices of $P1$ inside $O_E$ or
collinear with its frontier (with exception of the line segment
$\overline{CH(A)_{i+w+1}CH(A)_{i+w}}$). $I_E$ has $N_E$ vertices.

\If {$N_E>2$}

\State Determine the list $CH(I_E)$ which has $S$ elements.

\State Insert between $CH(A)_{i+w}$ and $CH(A)_{i+w+1}$ the elements of the list $CH(I_E)$ with exception of the first one and $CH(A)_{i+w}$ and $CH(A)_{i+w+1}$.

\State x=S-3

\State s=s+x

\EndIf

\State i=i-1

\EndIf

\State \Return {$CH(A), i, s$}

\EndProcedure

\end{algorithmic}
\end{algorithm}
\caption{Pseudocode of the new RCH algorithm (Cavity processing procedure).}
\label{CodeProcedure}
\end{figure}

\end{document}